\newcommand{\SWITCH}[1]{\STATE \textbf{switch} (#1)}
\newcommand{\ENDSWITCH}{\STATE \textbf{end switch}}
\newcommand{\CASE}[1]{\STATE \textbf{case} (#1)\textbf{:} \begin{ALC@g}}
\newcommand{\ENDCASE}{\end{ALC@g}}
\newcommand{\CASELINE}[2]{\STATE \textbf{case} (#1)\textbf{:} #2}
\newcommand{\IFLINE}[2]{\STATE \textbf{if} #1 \textbf{then} #2}
\newcommand{\ELSIFLINE}[2]{\STATE \textbf{else if} #1 \textbf{then} #2}
\newcommand{\ELSELINE}[1]{\STATE \textbf{else} #1}
\newcommand{\ENDIFLINE}[0]{\STATE \textbf{end if}}
\newcommand{\FORLINE}[2]{\STATE \textbf{for} #1 \textbf{do} #2 \textbf{end for}}
\def\Dcal{\mathcal{D}}
\def\Vcal{\mathcal{V}}
\def\prev{\bullet}
\def\since{\mathrel{\mathbb{S}}}
\def\past{\blacklozenge}
\def\spast{\diamonddot}
\def\impl{\rightarrow}
\def\mn{\mathfrak{m}}
\title{Efficient Runtime Monitoring with Metric Temporal Logic: A Case Study in
the Android Operating System}
\author{Hendra Gunadi\inst{1} \and Alwen Tiu\inst{2}}
\institute{
Research School of Computer Science, The Australian National University \\
\and
School of Computer Engineering,
Nanyang Technological University
}
\begin{document}
\maketitle

\begin{abstract}
We present a design and an implementation of a security policy
specification language based on metric linear-time temporal logic
(MTL). MTL features temporal operators that are indexed by time
intervals, allowing one to specify timing-dependent security policies.
The design of the language is driven by the problem
of runtime monitoring of applications in mobile devices. A main case 
the study is the privilege escalation attack in the Android operating
system, where an app gains access to certain resource or
functionalities that are not explicitly granted to it by the user,
through indirect control flow. To capture these attacks, we extend
MTL with recursive definitions, that are used to express call chains
betwen apps. We then show how the metric operators of MTL, in
combination with recursive definitions, can be used to specify
policies to detect privilege escalation, under various
fine grained constraints. 
We present a new algorithm, extending that of linear time temporal logic, 
for monitoring safety policies written in our specification language. The
monitor does not need to store the entire history of events generated
by the apps, something that is crucial for practical implementations.
We modified the Android OS kernel to allow us to insert our
generated monitors modularly. We have tested the modified OS on an
actual device, and show that it is effective in detecting policy
violations.
\end{abstract}

\section{Introduction}

Android is a popular mobile operating system (OS)  
that has been used in a range of mobile devices such as smartphones and tablet computers. 
It uses Linux as the kernel, which is extended with an application framework (middleware). 
Most applications of Android are written to run on top of this middleware,
and most of Android-specific security mechanisms are enforced at this level. 

Android treats each application as a distinct user with a unique 
user ID. At the kernel level, access control is enforced via the standard Unix permission mechanism
based on the user id (and group id) of the app. 
At the middleware level, each application is sandboxed, i.e., it is running in its own instance of
Dalvik virtual machine, and communication and sharing between apps are
allowed only through an inter-process communication (IPC)
mechanism. Android middleware provides a list of resources and
services such as sending SMS, access to contacts, 
or internet access.
Android enforces access control to these services via its permission mechanism:
each service/resource is associated with a certain unique
permission tag, and each app must request permissions to the
services it needs at installation time. Everytime an app requests access
to a specific service/resource, Android runtime security monitor
checks whether the app has the required permission tags for that
particular service/resource. 
A more detailed discussion of Android security architecture can be found in
\cite{Enck09}.


One 
problem with Android security mechanism 
is the problem of {\em privilege escalation}, that is, the possibility
of an app to gain access to services or resources that it does not have permissions to
access. Obviously privilege escalation is a common problem of every OS, e.g., when
a kernel bug is exploited to gain root access. However, in Android, privilege
escalation is possible even when apps are running in the confine of Android
sandboxes~\cite{defcon18,Davi10,LastPE}. 
There are two types of attacks that can lead to privilege escalation~\cite{LastPE}:
the {\em confused deputy attack} and the {\em collusion attack}. 
In the confused deputy attack, a legitimate app (the deputy) has permissions to
certain services, e.g., sending SMS, and exposes an interface to this functionality
without any guards. This interface can then be exploited by a malicious 
app to send SMS, even though the malicious app does not have the permission. 
Recent studies \cite{defcon18,Grace12NDSS,DroidChecker} 
show some system and consumer apps expose
critical functionalities that can be exploited to launch confused deputy attacks. 
The collusion attack requires two or more malicious apps
to collaborate. 
We have yet to encounter such a malware, either in the Google Play market
or in the third party markets, 
although a proof-of-concept malware with such properties, 
called SoundComber~\cite{Soundcomber}, has been constructed.

Several security extensions to Android have been proposed to deal with privilege
escalation attacks~\cite{QUIRE,IPCInspection,LastPE}. 
Unlike these works, we aim at designing a high-level policy language 
that is expressive enough
to capture privilege escalation attacks, but is also able to express more refined
policies (see Section~\ref{examples}). 
Moreover, we aim at designing a lightweight monitoring framework, where
policy specifications can be modified easily and enforced efficiently. Thus 
we aim at an automated generation of security monitors that can efficiently
enforce policies written in our specification language.

On the specific problem of detecting privilege escalation, it is essentially
a problem of tracking (runtime) control flow, which is in general a difficult
problem and would require a certain amount static analysis~\cite{Denning,TaintDroid}. 
So we adopt a `lightweight' heuristic to ascertain causal dependency between
IPC calls: we consider two successive calls, say from A to B, followed by a call from B to C,
as causally dependent if they happen within a certain reasonably short time frame. 
This heuristic seems sensible in the presence of confused deputy attacks, but
can be of course circumvented by colluding attacks. For the latter there is probably
not a general solution that can be effective in all cases, e.g., 
when covert channels are involved~\cite{Soundcomber}, so we shall restrict to addressing
the former.

The core of our policy language, called RMTL, is essentially a past-fragment 
of metric linear temporal logic (MTL)~\cite{Alur90lics,Thati05MTL,Basin08FSTTCS}.
We consider only the fragment of MTL with past-time operators, as this is sufficient
for our purpose to enforce history-sensitive access control. This also means that
we can only express safety properties~\cite{Lichtenstein}, but not policies capturing
obligations as in, e.g., \cite{Basin08FSTTCS}. 
Temporal operators are useful in this setting to 
enforce access control on apps based on histories of their executions;
see Section~\ref{examples}.  
Such a history-dependent policy cannot be expressed in
the policy languages used in \cite{QUIRE,IPCInspection,LastPE}.

MTL by itelf is, however, insufficient to express transitive closures 
of relations, which is needed to specify an IPC call chains between apps, among
others. 
To deal with this, we extend MTL with a recursive definitions, e.g., one
would be able to write a definition such as:
\begin{equation}
\label{eq:trans}
trans(x,y) := call(x,y) \lor \exists z. \diamonddot_n trans(x,z) \land call(z,y), 
\end{equation}
where $call$ denotes the IPC event. This equation 
defines $trans$ the reflexive-transitive closure of $call.$ The metric
operator $\diamonddot_n \phi$ means intuitively $\phi$ holds within $n$
time units in the past; we shall see a more precise definition of the 
operators in Section~\ref{logic}. 
Readers familiar with modal $\mu$-calculus~\cite{Bradfield07handbook} 
will note that this is but a syntactic sugar for 
$\mu$-expressions for (least) fixed points.


To be practically enforceable in Android, RMTL monitoring algorithm must satisfy an
important constraint, i.e., the algorithm must be {\em trace-length independent}. 
This is because the number of events generated by Android can range in the thousands
per hour, so if the monitor must keep all the events generated by
Android, its performance will degrade significantly overtime. 
Another practical consideration also motivates a restriction to metric operators
that we adopt in RMTL. More specifically, MTL allows a metric version
of the `since' operator of the form $\phi_1 \since_{[m,n)} \phi_2$, where
$[m, n)$ specifies a half-closed (discrete) time interval from $m$ to $n$. 
The monitoring algorithm for MTL in \cite{Thati05MTL} works by first expanding this
formula into formulas of the form $\phi_1 \since_{[m',n')} \phi_2$ where
$[m',n')$ is any subinterval of $[m,n)$. A similar expansion is also used 
implicitly in monitoring for first-order MTL in \cite{Basin08FSTTCS}, i.e., in their
incremental automatic structure extension in their first-order logic encoding for
the `since' and `until' operators. In general, if we have $k$ 
nested occurrences of metric operators, each with interval $[m,n)$, the number of
formulas produced by this expansion is bounded by $O((\frac{(n-m) \times (n - m + 1)}{2})^{k})$. 
In Android, event timestamps are in milliseconds, so this kind of expansion
is not practically feasible. 
For example, suppose we have a policy that monitors three successive
IPC calls that happen within 10 seconds between successive calls. 
This requires two nested metric operators with intervals $[0,10^4)$
to specify. The above naive expansion would produce around
$25 \times 10^{14}$ formulas, and assuming the truth value of each formula
is represented with 1 bit, this would require more than $30$ GB of 
storage to store all their truth values,
something which is beyond the capacity of most smartphones today. 
An improvement to this exponential expansion is proposed in \cite{Basin11RV,Reinbacher12RV}, where one
keeps a sequence of timestamps for each metric temporal operator occuring
in the policy. 
This solution, although avoids the exponential
expansion, is strictly speaking not trace-length independent. 
This solution seems optimal so it is hard to improve it without further
restriction to the policy language. We show that, if one restricts
the intervals of metric operators to the form $[0,n)$, one only needs to
keep one timestamp for each metric operator in monitoring; see Section~\ref{monitor}.
 


To summarise, our contributions are as follows:
\begin{enumerate}
\item In terms of results in runtime verification, 
our contribution is in the design of a new logic-based policy 
language that extends MTL with recursive definitions, that avoids
exponential expansion of metric operators, and for
which the policy enforcement is trace-length independent.

\item In terms of the application domain, ours is the first 
implementation of a logic-based runtime security monitor for Android 
that can enforce 
history-based access control policies, including those that concern 
privilege escalations. Our monitoring framework can express temporal and metric-based
policies not possible in existing works \cite{QUIRE,IPCInspection,LastPE,bauer2013rv}. 

\end{enumerate}

The rest of the paper is organized as
follows. 
Section~\ref{logic} introduces our policy language RMTL. 
Some example policies are described in \ref{examples}. 
Section~\ref{implementation} discusses our implementation of the monitors
for RMTL, and the required modification of Android OS kernel
to integrate our monitor into the OS. 
In Section~\ref{conclusion} we conclude and discuss related and future works.

Some proofs of the lemmas and theorems can be found in the appendix. 
Details of the implementation of the
monitor generator and the binaries of the modified
Android OS are available online.\footnote{
\url{http://users.cecs.anu.edu.au/\~{}hengunadi}/LogicDroid.html.}

\paragraph{Acknowledgment} This work is partly supported by the Australian Research Council Discovery Grant DP110103173.

\section{The policy specification language RMTL}
\label{logic}

Our policy specification language, which we call RMTL, is based on an extension of 
metric linear temporal logic (MTL)~\cite{MTL}. The semantics of LTL~\cite{Pnueli77} 
is defined in terms of models which are sequences of states (or worlds). 
In our case, we restrict to finite sequences of states. 
MTL extends LTL models by adding timestamps to each state,
and adding temporal operators that incorporate timing constraints, e.g.,
MTL features temporal operators such as $\diamond_{[0,3)} \phi$ which
expresses that $\phi$ holds in some state in the future, and the timestamp
of that world is within 0 to 3 time units from the current timestamp. 
We restrict to a model of MTL that uses discrete time, i.e.,
timestamps in this case are non-negative integers. 
We shall also restrict to the past-time fragment of MTL.

We extend MTL with two additional features: first-order quantifiers and recursive
definitions. Our first-order language is a multi-sorted language. For this paper,
we consider only two sorts, which we call $prop$ (for `properties') and $app$
(for denoting applications). Sorts are ranged over by $\alpha$. 
We first fix a {\em signature} $\Sigma$ for our first-order language, which is used
to express terms and predicates of the language. 
We shall consider only constant symbols and predicate symbols, but no function symbols.
We distinguish two types of predicate symbols: {\em defined} predicates 
and {\em undefined} ones.
The defined predicate symbols are used to write recursive definitions and 
to each of such symbols we associate
a formula as its definition. We shall come back to this shortly. 

Constant symbols are ranged over by $a$, $b$ and $c$,
undefined predicate symbols are ranged over by $p$, $q$ and $r$, and defined predicate
symbols are ranged over by $P$, $Q$ and $R.$ 
We assume an infinite set of sorted variables $\Vcal$,
whose elements are ranged over by $x$, $y$ and $z.$ We sometimes write $x_\alpha$ 
to say that $\alpha$ is the sort of variable $x.$
A {\em $\Sigma$-term} is either a constant symbol 
$c\in\Sigma$ or a variable $x\in \Vcal$.  We use $s$, $t$ and $u$ to range over terms. 
To each symbol in $\Sigma$ we associate a sort information. We shall write $c: \alpha$
when $c$ is a constant symbol of sort $\alpha.$ A predicate symbol of arity $n$ has
sort of the form $\alpha_1 \times \cdots \times \alpha_n$, and such a predicate
can only be applied to terms of sorts $\alpha_1,\dots,\alpha_n.$

Constant symbols are used to express things like permissions in the Android OS, e.g.,
reading contacts, sending SMS, etc., and user ids of apps. 
Predicate symbols are used express
events such as IPC calls between apps, and properties of an app, such as whether
it is a system app, a trusted app (as determined by the user). 
As standard in first-order logic (see e.g. \cite{fitting}), the semantics of terms
and predicates are given in terms of a first-order structure, i.e., a set $\Dcal_\alpha$,
called a {\em domain}, for each sort $\alpha$, and an interpretation
function $I$ assigning each constant symbol $c : \alpha \in \Sigma$ an element 
of $c^I \in \Dcal_\alpha$ and each predicate symbol 
$p : \alpha_1 \times \cdots \times \alpha_n \in \Sigma$ 
an $n$-ary relation $p^I \subseteq  \Dcal_{\alpha_1} \times \cdots \times \Dcal_{\alpha_n}.$
We shall assume constant domains in our model, i.e., every world has the same domain. 

The formulas of RMTL is defined via the following grammar:
$$
\begin{array}{ll}
F := & \bot \mid p(t_1,\dots,t_m) \mid P(t_1,\dots,t_n) \mid F \lor F \mid 
\neg F \mid \prev F \mid F \since F \mid \past F \mid \spast F \mid \\
& \prev_{n} F \mid F \since_{n} F \mid \past_{n} F \mid \spast_{n} F \mid 
\exists_\alpha x.F\\
\end{array}
$$
where $m$ and $n$ are natural numbers. The existential quantifier is annotated
with a sort information $\alpha$. For most of our examples and applications, 
we only quantify only over variables of sort $app$. 
The operators indexed by $n$ are {\em metric temporal operators}.
The $n \geq 1$ here denotes the interval $[0,n)$, so these are special cases
of the more general MTL operators in \cite{MTL}, where intervals can take 
the form $[m,n)$, for $n \geq m \geq 0.$ 
We use $\phi$, $\varphi$ and $\psi$ to range over formulas. We assume that unary 
operators bind stronger than the binary operators, so $\prev \phi \lor \psi$
means $(\prev \phi) \lor \psi$.
We write $\phi(x_1,\dots,x_n)$ to denote a formula whose free variables are
among $x_1,\dots,x_n$. Given such a formula, we write $\phi(t_1,\dots,t_n)$
to denote the formula obtained by replacing $x_i$ with $t_i$ for every
$i \in \{1,\dots,n\}$.

To each defined predicate symbol $P : \alpha_1 \times \cdots \times \alpha_n$, 
we associate a formula $\phi_P$, which we call the {\em definition} of $P$. 
Notationally, we write $P(x_1,\dots,x_n) := \phi_p(x_1,\dots,x_n).$ 
We require that $\phi_P$ is {\em guarded}, i.e., every occurrence of any
recursive predicate $Q$ in $\phi_P$ is prefixed by 
either $\bullet$, $\bullet_m$, $\diamonddot$ or $\diamonddot_n$. 
This guardedness condition is important to guarantee termination of recursion in
model checking. 

Given the above logical operators, we can define additional operators via their negation,
e.g., $\top$ is defined as $\neg \bot$, $\phi \land \psi$ is defined as $\neg (\neg \phi \lor \neg \psi)$, 
$\phi \impl \psi$ is defined as $\neg \phi \lor \psi$, and 
$\forall_\alpha x.\phi$ is defined as $\neg (\exists_\alpha x.\neg \phi)$, etc. 

Before proceeding to the semantics of RMTL, we first define a well-founded ordering on
formulae of RMTL, which will be used later.
\begin{definition}
\label{def:order}
We define a relation $\prec_S$ on the set RMTL formulae as the smallest
relation satisfying the following conditions:
\begin{enumerate}
\item For any formula $\phi$ of the form
$p(\vec t)$, $\bot$, 
$\bullet \psi$, $\bullet_n \psi$, $\diamonddot \psi$ and $\diamonddot_n \psi$,
there is no $\phi'$ such that $\phi' \prec_S \phi$. 
\item For every recursive definition $P(\vec x) := \phi_P(\vec x)$,
we have $\phi_P(\vec t) \prec_S P(\vec t)$ for every terms $\vec t.$
\item $\psi \prec_S \psi \lor \psi'$, $\psi \prec_S \psi' \lor \psi$,
$\psi \prec_S \neg \psi$, and $\psi \prec_S \exists x.\psi$.
\item $\psi_i \prec_S \psi_1 \since \psi_2$, and $\psi_i \prec_S \psi_1 \since_n \psi_2$, 
for $i \in \{1,2\}$
\end{enumerate}
We denote with $\prec$ the reflexive and transitive closure of $\prec_S.$
\end{definition}

\begin{lemma}
\label{lm:order}
The relation $\prec$ on RMTL formulas is a well-founded partial order. 
\end{lemma}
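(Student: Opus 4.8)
The plan is to exhibit a rank function $\mu$ from RMTL formulae into a well-ordered set such that $\phi \prec_S \psi$ implies $\mu(\phi) < \mu(\psi)$. Since $\prec$ is by construction the reflexive--transitive closure of $\prec_S$, reflexivity and transitivity are immediate, so the entire content of the lemma is antisymmetry together with well-foundedness of the strict part. Both follow at once from such a $\mu$: a strictly decreasing rank rules out any $\prec_S$-cycle, so $\prec$ is antisymmetric, and any infinite descending chain of distinct elements of $\prec$ would yield an infinite descending $\prec_S^+$-chain, contradicting the strict decrease of $\mu$ since the target of $\mu$ is well-ordered.

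For the rank I would take the lexicographic product $\mathbb{N} \times \mathbb{N}$ and set $\mu(\phi) = (u(\phi), s(\phi))$, where both components are defined by structural recursion on $\phi$ --- crucially, \emph{without} ever unfolding a defined predicate, so that $\mu$ is a well-defined total function. The size $s$ counts only the ``accessible'' boolean/quantifier/since structure: $s(\psi_1 \lor \psi_2) = s(\psi_1 \since \psi_2) = s(\psi_1 \since_n \psi_2) = s(\psi_1) + s(\psi_2) + 1$ and $s(\neg\psi) = s(\exists x.\psi) = s(\psi) + 1$, while $s$ returns $0$ on $\bot$, on atoms $p(\vec t)$ and $P(\vec t)$, and on every formula whose principal connective is one of $\prev, \prev_n, \spast, \spast_n, \past, \past_n$. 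The flag $u$ records whether a defined predicate is reachable without crossing a temporal operator: $u(P(\vec t)) = 1$; $u$ is $0$ on $\bot$, on undefined atoms, and on every temporally-prefixed formula; $u(\neg\psi) = u(\exists x.\psi) = u(\psi)$; and $u(\psi_1 \lor \psi_2)$, $u(\psi_1 \since \psi_2)$, $u(\psi_1 \since_n \psi_2)$ are each the maximum of the components' $u$-values.

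The verification then proceeds by a case analysis on the clauses of Definition~\ref{def:order}. For the structural clauses (3) and (4), the relevant immediate subformula $\psi'$ of $\psi$ satisfies $u(\psi') \le u(\psi)$ and $s(\psi') < s(\psi)$ by construction, so $\mu(\psi') < \mu(\psi)$ lexicographically. The interesting case is the unfolding step (2), namely $\phi_P(\vec t) \prec_S P(\vec t)$: here $u(P(\vec t)) = 1$, whereas I claim $u(\phi_P(\vec t)) = 0$, so the pair strictly drops already in its first coordinate, irrespective of how large the body $\phi_P(\vec t)$ is. This claim is exactly where the guardedness hypothesis on $\phi_P$ is used: every occurrence of a defined predicate in $\phi_P$ sits under one of $\prev, \prev_n, \spast, \spast_n$, each of which $u$ treats as opaque, so no defined predicate is reachable through the $\max$ and structural clauses defining $u$, giving $u(\phi_P) = 0$; and since terms contain no formulas, forming $\phi_P(\vec t)$ leaves the operator tree, and hence this property, intact.

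I expect the unfolding clause to be the only real obstacle. Structural descent is handled by the ordinary subformula size, but an unfolding $P(\vec t) \to \phi_P(\vec t)$ typically \emph{increases} syntactic size, so size alone cannot work; the point is that guardedness forces each recursive call to live beneath a temporal operator that the dominant coordinate of $\mu$ cannot see, so a chain can perform at most one unfolding before it is confined to pure structural descent. Once the strict decrease of $\mu$ is established, well-foundedness of $\mathbb{N} \times \mathbb{N}$ under the lexicographic order closes the argument.
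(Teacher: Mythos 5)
Your proof is correct, but it takes a genuinely different route from the paper's. The paper argues directly by contradiction on an infinite descending $\prec_S$-chain: it first shows the chain must contain an unfolding step $\phi_P(\vec t) \prec_S P(\vec t)$ (else all steps are the structural clauses (3)--(4) and sizes decrease), then proves by an inner induction that guardedness is preserved along every subsequent step, so no second unfolding can occur, after which strict decrease of syntactic size finishes the argument; antisymmetry is then derived from well-foundedness. You instead package the same key insight --- guardedness confines every recursive predicate behind a prefix operator that clause (1) makes $\prec_S$-opaque, so each descent admits at most one unfolding --- into a lexicographic rank $\mu = (u, s)$ on $\mathbb{N}\times\mathbb{N}$, where $u$ detects an unguarded defined predicate and $s$ measures only the accessible structure. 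Your approach buys a cleaner bookkeeping: the inner induction on preservation of guardedness is absorbed into the monotonicity of $u$ along accessible subformulas, and antisymmetry and well-foundedness fall out simultaneously from the strict decrease of $\mu$; it also avoids the paper's slightly informal ``without loss of generality'' repositioning of the chain. The paper's argument is more elementary in that it needs no auxiliary measure, only the observation that unfolding can happen at most once per chain. Both hinge on the same two facts: the guards $\prev, \prev_n, \spast, \spast_n$ are bottom elements of $\prec_S$, and term substitution does not disturb the operator tree.
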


For our application, we shall restrict to finite domains. 
Moreover, we shall restrict to an interpretation $I$
which is injective, i.e., mapping every constant $c$ to a unique element of $\Dcal_{\alpha}.$
In effect we shall be working in the term model, so elements of $\Dcal_\alpha$ are
just constant symbols from $\Sigma.$ So we shall use a constant symbol, say $c : \alpha$,
to mean both $c \in \Sigma$ and $c^I \in \Dcal_\alpha$. 
With this fix interpretation, the definition of
the semantics (i.e., the satisfiability relation) can be much simplified, e.g., we do not
need to consider valuations of variables. 
A {\em state}  is a set of undefined atomic formulas of the form
$p(c_1,\dots,c_n)$. 
Given a sequence $\sigma$, we write $|\sigma|$ to denote its length, and 
we write $\sigma_i$ to denote the $i$-th element of $\sigma$
when it is defined, i.e., when $1 \leq i \leq |\sigma|.$ 
A {\em model} is a pair $(\pi,\tau)$ of a sequence of {\em states} $\pi$ and a sequence 
{\em timestamps}, which are natural numbers, such that $|\pi| = |\tau|$ and
$\tau_i \leq \tau_j$ whenever $i\leq j$. 

Let $<$ denote the total order on natural numbers. Then we can define 
a well-order on pairs $(i,\phi)$ of natural numbers and formulas by taking the lexicographical
ordering $(<, \prec)$. 
The satisfiability relation between a model $\rho = (\pi,\tau)$, a {\em world} $i \geq 1$ (which is a natural number) 
and a {\em closed} formula $\phi$ (i.e., $\phi$ contains no free variables),
written $(\rho,i) \models \phi$, is defined by induction on the pair $(i, \phi)$ as follows,  
where we write $(\rho, i) \not \models \phi$ when $(\rho, i) \models \phi$ is false. 
\begin{itemize}
\item $(\rho, i) \not \models \bot$ 
\item $(\rho, i) \models \neg \phi$ iff $(\rho, i) \not \models \phi$.
\item $(\rho, i) \models p(c_1,\dots,c_n)$ iff $p(c_1,\dots,c_n) \in \pi_i.$
\item $(\rho, i) \models P(c_1,\dots,c_n)$ iff $(\rho,i) \models \phi(c_1,\dots,c_n)$
where $P(\vec x) := \phi(\vec x).$
\item $(\rho, i) \models \phi \lor \psi$ iff 
$(\rho, i) \models \phi$ or $(\rho,i) \models \psi$. 
\item $(\rho, i) \models \prev \phi$ iff $i > 1$ and $(\rho, i-1) \models \phi.$
\item $(\rho, i) \models \past \phi$ iff there exists $j \leq i$ s.t. $(\rho, j) \models \phi$. 
\item $(\rho, i) \models \diamonddot \phi$ iff $i > 1$ and there exists $j < i$ s.t. $(\rho, j) \models \phi.$
\item $(\rho, i) \models \phi_1 \since \phi_2$ 
iff there exists $j \leq i$ such that $(\rho, j) \models \phi_2$
and $(\rho,k)\models \phi_1$ for every $k$ s.t. $j < k \leq i.$ 

\item $(\rho, i) \models \prev_n \phi$ iff $i > 1$, $(\rho, i-1) \models \phi$ and $\tau_i - \tau_{i-1} < n.$
\item $(\rho, i) \models \past_n \phi$ iff there exists $j \leq i$ s.t. $(\rho, j) \models \phi$ and $\tau_i - \tau_j < n$. 
\item $(\rho, i) \models \diamonddot_n \phi$ iff $i > 1$ and there exists $j < i$ s.t. $(\rho, j) \models \phi$ and $\tau_i - \tau_j < n.$
\item $(\rho, i) \models \phi_1 \since_n \phi_2$ 
iff there exists $j \leq i$ such that $(\rho, j) \models \phi_2$, 
$(\rho,k)\models \phi_1$ for every $k$ s.t. $j < k \leq i$, and $\tau_i - \tau_j < n.$ 
\item $(\rho, i) \models \exists_\alpha x.\phi(x)$ iff 
there exists $c \in \Dcal_\alpha$ s.t. $(\rho, i) \models \phi(c).$
\end{itemize}
Note that due to the guardedness condition in recursive definitions, 
our semantics for recursive predicates is much simpler than the usual definition
as in $\mu$-calculus, which typically involves the construction of a (semantic) fixed point operator. 
Note also that some operators are redundant, e.g., $\past \phi$ can be defined as
$\top \since \phi$, and $\diamonddot \phi$ can be defined
as $\prev \past \phi.$ This holds for some metric operators, e.g., 
$\past_n \phi$ and $\diamonddot_n \phi$ can be defined as,
respectively, $\top \since_n \phi$ and 
\begin{equation}
\label{eq:diamonddot}
\diamonddot_n \phi = \bigvee_{i+j = n} \prev_i \past_j \phi
\end{equation}
This operator will be used to specify an active call chain, as we shall see later, so
it is convenient to include it in our policy language.

In the next section, we shall assume that $\past$, $\diamonddot$, 
$\past_n$ as derived connectives. 
Since we consider only finite domains, $\exists_\alpha x.\phi(x)$ can be
reduced to a big disjunction $\bigvee_{c \in \Dcal_\alpha} \phi(c)$, so we shall
not treat $\exists$-quantifier explicitly. 
This can be problematic if the domain of quantification is big, 
as it suffers the same kind
of exponential explosion as with the expansion of metric operators in MTL~\cite{Thati05MTL}. 
We shall defer the explicit treatment of quantifiers to future work.

\section{Trace-length independent monitoring}
\label{monitor}

The problem of monitoring is essentially a problem of model checking, i.e., to decide
whether $(\rho, i) \models \phi$, for any given $\rho = (\pi,\tau)$, $i$ and $\phi.$
In the context of Android runtime monitoring, a state in $\pi$ can be any events
of interest that one would like to capture, e.g., the IPC call events, 
queries related to location information or contacts, etc.
To simplify discussions, and because our main interest is
in privilege escalation through IPC, the only type of event we consider
in $\pi$ is the IPC event, which we model with the 
predicate $call : app \times app$. 

Given a policy specification $\phi$, a naive monitoring algorithm that enforces this
policy would store the entire event history $\pi$ and every time a new event
arrives at time $t$, it would check $(([\pi; e], [\tau; t]), |\rho| + 1) \models \phi.$
This is easily shown decidable, but is of course rather inefficient.  In general, the model
checking problem for RMTL (with finite domains) can be shown PSPACE hard 
following the same argument as in \cite{FOTL}.
A design criteria of RMTL is that enforcement of
policies does not depend on the length of history of events, i.e., at any time 
the monitor only needs to keep track of a fixed number of states. 
Following~\cite{bauer2013rv}, we call a monitoring algorithm that satisfies 
this property {\em trace-length independent.}

For PTLTL, trace-length independent monitoring algorithm exists, e.g., 
the algorithm by Havelund and Rosu~\cite{DYNAMICLTL}, which depends only on
two states in a history. That is, satisfiability of $(\rho, i+1) \models \phi$
is a boolean function of satisfiability of $(\rho, i + 1) \models \psi$, for every
strict subformula $\psi$ of $\phi$, and satisfiability of $(\rho, i) \models \psi'$,
for every subformula $\psi'$ of $\phi.$ 
This works for PTLTL because the semantics of temporal operators in PTLTL 
can be expressed in a recursive form, e.g., the semantics of $\since$ can
be equally expressed as~\cite{DYNAMICLTL}: 
$(\rho, i+1) \models \phi_1 \since \phi_2$ iff
$(\rho, i+1) \models \phi_2$, or
$(\rho, i + 1) \models \phi_1$ and 
$(\rho, i) \models \phi_1 \since \phi_2.$
This is not the case for MTL. 
For example, satisfiability of 
the unrestricted `since' operator $\since_{[m,n)}$ can be equivalently expressed as:
\begin{equation}
\begin{array}{ll}
(\rho, i+1) \models \phi_1 \since_{[m,n)} \phi_2  \mbox{ iff  } & m = 0, n > 1, \mbox{ and } (\rho, i+1) \models \phi_2, \mbox{ or  } \\
 & (\rho, i+1) \models \phi_1 \mbox{ and } (\rho, i) \models \phi_1 \since_{[m', n')} \phi_2
\end{array}
\label{eq:since}
\end{equation}
where $m' = min(0, m - \tau_{i+1} + \tau_i)$ and $n' = min(0, n - \tau_{i+1} + \tau_i).$
Since $\tau_{i+1}$ can vary, the value of $m'$ and $n'$ can vary, depending on
the history $\rho.$ 
We avoid the expansion of metric operators in monitoring by 
restricting the intervals in the metric operators to the form $[0, n).$
We show that clause (\ref{eq:since}) can be brought back to a purely recursive form. 
The key to this is the following lemma:

\begin{lemma}[Minimality]
\label{lm:minimal}
If $(\rho, i) \models \phi_1 \since_n \phi_2$ ($(\rho, i) \models \diamonddot_n \phi$) then there exists an $m \leq n$ 
such that 
$(\rho, i) \models \phi_1 \since_m \phi_2$ (resp. $(\rho, i) \models \diamonddot_m \phi$) and 
such that for every $k$ such that $0 < k < m$, 
we have $(\rho, i) \not \models \phi_1 \since_k \phi_2$ (resp., 
$(\rho, i) \not \models \diamonddot_k \phi$).
\end{lemma}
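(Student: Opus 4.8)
The plan is to prove the statement for $\phi_1 \since_n \phi_2$; the argument for $\diamonddot_n \phi$ is entirely analogous (and in fact follows from the $\since$ case via the identity $\diamonddot_n \phi = \bullet(\top \since_n \phi)$, up to the shift by one world). The key observation is that the parameter $n$ in $\since_n$ enters the semantics only through the single constraint $\tau_i - \tau_j < n$, and this constraint is monotone in $n$: if $(\rho,i) \models \phi_1 \since_n \phi_2$ is witnessed by some $j \leq i$ with $\tau_i - \tau_j < n$, then the \emph{same} witness $j$ establishes $(\rho,i) \models \phi_1 \since_{n'} \phi_2$ for every $n' > n$, since all the non-metric conjuncts ($(\rho,j)\models\phi_2$ and $(\rho,k)\models\phi_1$ for $j < k \leq i$) are independent of $n$, and $\tau_i - \tau_j < n \leq n'$. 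Thus satisfiability of $\phi_1 \since_n \phi_2$ at a fixed world $i$ is an upward-closed (monotone increasing) property of $n$.

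Given this monotonicity, I would argue as follows. Suppose $(\rho,i) \models \phi_1 \since_n \phi_2$. Consider the set $S = \{ k \in \mathbb{N} : 0 < k \leq n \text{ and } (\rho,i) \models \phi_1 \since_k \phi_2 \}$. This set is nonempty, because $n \in S$ by hypothesis. Since $S$ is a nonempty set of natural numbers, it has a least element; call it $m$. Then $m \leq n$, and $(\rho,i) \models \phi_1 \since_m \phi_2$ holds by definition of $S$. It remains to check that $m$ is minimal in the required sense: for every $k$ with $0 < k < m$ we have $(\rho,i) \not\models \phi_1 \since_k \phi_2$. But any such $k$ satisfies $0 < k < m \leq n$, so $k$ is a candidate for membership in $S$; since $m$ is the least element of $S$ and $k < m$, we must have $k \notin S$, i.e. $(\rho,i) \not\models \phi_1 \since_k \phi_2$. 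This gives exactly the claimed minimal $m$.

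I do not expect a serious obstacle here, as the statement is essentially a well-ordering argument packaged around the monotonicity of the metric constraint; the content is that among all interval-bounds witnessing satisfaction there is a smallest one, which is immediate once one notes that the natural numbers are well-ordered and that the bounds under consideration form a bounded set (everything is $\leq n$). The one point to state carefully is the monotonicity itself, namely that enlarging the bound $n$ can only preserve satisfiability and never destroy it; this is where one invokes that the witness $j$ is insensitive to the value of $n$ apart from the single inequality $\tau_i - \tau_j < n$. For the $\diamonddot_n$ variant I would either repeat the identical set-theoretic argument with the semantic clause for $\diamonddot_n$ in place of that for $\since_n$, or simply appeal to equation~(\ref{eq:diamonddot}) together with the $\since$ case. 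The only mild subtlety worth flagging is the boundary condition $k > 0$ in the minimality requirement, which is harmless because $\phi_1 \since_0 \phi_2$ is vacuously unsatisfiable (the constraint $\tau_i - \tau_j < 0$ can never hold), so $0$ never enters $S$ and need not be excluded by hand.
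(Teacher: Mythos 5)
Your argument is correct, but it is a genuinely different and more elementary route than the paper's. You observe that the lemma, as stated, is purely an existence claim, and you obtain the minimal $m$ by applying the well-ordering principle to the nonempty set $S = \{k : 0 < k \leq n,\ (\rho,i)\models \phi_1 \since_k \phi_2\}$; minimality over $0 < k < m$ is then automatic since every such $k$ lies in the range $\{1,\dots,n\}$ and is below $\min S$. (Note that for this you do not even need the monotonicity observation you spend a paragraph on -- that is really the content of the paper's separate Lemma~\ref{lm:monotonicity}.) The paper instead proves the lemma by induction on $i$ and \emph{constructs} $m$ explicitly: $m=1$ when $(\rho,i)\models\phi_2$, and otherwise $m = m' + \tau_i - \tau_{i-1}$ where $m'$ is the minimal bound at world $i-1$. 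What the paper's approach buys is precisely this recurrence, which is what Algorithm~\ref{Iter_Algorithm} implements to maintain the function $\mn$ incrementally (the proof of Theorem 2 explicitly appeals to ``the construction shown in the proof of Lemma~\ref{lm:minimal}''); your argument establishes existence without yielding the update rule, so if the paper adopted it, the correctness of the $mcur$/$mprev$ updates would have to be argued separately. One small inaccuracy worth correcting: the identity $\diamonddot_n \phi = \prev(\top \since_n \phi)$ you mention in passing is false, because the metric constraint in $\diamonddot_n$ is anchored at $\tau_i$ while the inner $\since_n$ would anchor it at $\tau_{i-1}$ (the paper's equation~(\ref{eq:diamonddot}) shows the actual, more involved, decomposition). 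This does not damage your proof, since your primary route for $\diamonddot_n$ is to rerun the same well-ordering argument directly on its semantic clause, which works.
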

Given $\rho$, $i$ and $\phi$, we define a function $\mn$ as follows:
$$
\mn(\rho,i,\phi) = 
\left\{
\begin{array}{l}
m, \mbox{ if $\phi$ is either $\phi_1 \since_n \phi_2$
or $\diamonddot_n \phi'$ and $(\rho,i) \models \phi$, } \\
0, \mbox{ otherwise. }
\end{array}
\right. 
$$
where $m$ is as given in Lemma~\ref{lm:minimal}; we shall see how its value
is calculated in Algorithm~\ref{Iter_Algorithm}.  
The following theorem follows from Lemma~\ref{lm:minimal}.
\begin{theorem}[Recursive forms]
\label{thm:recursive-form}
For every model $\rho$, every $n \geq 1$, $\phi$, $\phi_1$ and $\phi_2$, 
and every $1 < i \leq |\rho|$, the following hold:
\begin{enumerate}
\item $(\rho, i) \models \phi_1 \since_n \phi_2$ iff
$(\rho, i) \models \phi_2$, or 
$(\rho, i) \models \phi_1$ and 
$(\rho, i-1) \models \phi_1 \since_n \phi_2$ and 
$n - (\tau_i - \tau_{i-1}) \geq \mn(\rho, i-1, \phi_1 \since_n \phi_2).$

\item $(\rho, i) \models \diamonddot_n \phi$ iff
$(\rho, i - 1) \models \phi$ and $\tau_i - \tau_{i-1} < n$, or 
$(\rho, i-1) \models \diamonddot_n \phi$ and 
$n - (\tau_i - \tau_{i-1}) \geq \mn(\rho, i-1, \diamonddot_n \phi).$

\end{enumerate}
\end{theorem}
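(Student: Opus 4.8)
The plan is to prove each biconditional directly from the witness form of the semantics, splitting on the position of the witness and using Lemma~\ref{lm:minimal} to discharge the timestamp bookkeeping; no induction on formula structure is needed. Two auxiliary facts would be used throughout. The first is \emph{monotonicity in the index}: if $j$ witnesses $(\rho,i)\models\phi_1\since_m\phi_2$ (respectively $(\rho,i)\models\diamonddot_m\phi$), the same $j$ witnesses the formula for every $m'\geq m$, since enlarging the bound only weakens the constraint $\tau_i-\tau_j<m$. Hence satisfiability is upward closed in the index, and $\mn(\rho,i,\cdot)$ is exactly its least element. The second is \emph{timestamp monotonicity}, $\tau_{i-1}\leq\tau_i$, which lets a witness at world $i$ be reused at world $i-1$ without breaking its timing constraint.

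For clause~1 I would first treat the forward direction. Unfolding $(\rho,i)\models\phi_1\since_n\phi_2$ gives a witness $j\leq i$ with $(\rho,j)\models\phi_2$, with $\phi_1$ holding throughout $(j,i]$, and with $\tau_i-\tau_j<n$. If $j=i$ then $(\rho,i)\models\phi_2$ and the first disjunct holds. If $j<i$, then instantiating the chain at $k=i$ gives $(\rho,i)\models\phi_1$, and since $j\leq i-1$ with $\tau_{i-1}-\tau_j\leq\tau_i-\tau_j<n$ the same witness shows $(\rho,i-1)\models\phi_1\since_n\phi_2$. It then remains to verify the numeric side condition: as $j$ already witnesses the formula at $i-1$ with bound $\tau_{i-1}-\tau_j+1$, upward closure gives $\mn(\rho,i-1,\phi_1\since_n\phi_2)\leq\tau_{i-1}-\tau_j+1$, and rearranging $\tau_i-\tau_j<n$ yields $n-(\tau_i-\tau_{i-1})\geq\tau_{i-1}-\tau_j+1\geq\mn(\rho,i-1,\phi_1\since_n\phi_2)$. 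For the converse, the disjunct $(\rho,i)\models\phi_2$ is immediate by taking $j=i$ (legal as $n\geq1$). In the remaining disjunct I would invoke Lemma~\ref{lm:minimal} to obtain the witness $j$ realising $\mn(\rho,i-1,\phi_1\since_n\phi_2)$, so that $\tau_{i-1}-\tau_j<\mn(\rho,i-1,\phi_1\since_n\phi_2)\leq n-(\tau_i-\tau_{i-1})$, whence $\tau_i-\tau_j<n$; since $(\rho,i)\models\phi_1$ extends the $\phi_1$-chain from $(j,i-1]$ to $(j,i]$, this $j$ witnesses $(\rho,i)\models\phi_1\since_n\phi_2$.

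Clause~2 follows the same template, splitting on whether the strict witness $j<i$ equals $i-1$, which yields the first disjunct, or satisfies $j<i-1$, which yields the second via the identical $\mn$-bound argument. The absence of a $\phi_1$-chain makes this case lighter: I only relay the single state $\phi$ together with its timestamp, and the converse again uses Lemma~\ref{lm:minimal} to recover a witness with gap strictly below $\mn(\rho,i-1,\diamonddot_n\phi)$ before adding $\tau_i-\tau_{i-1}$ to reach the bound $n$. The world-positivity side conditions ($i>1$, and $i-1>1$ when a witness strictly precedes $i-1$) are inherited automatically from $1<i$ and from $j\geq1$.

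I expect the main obstacle to be the arithmetic of the side conditions rather than anything structural: one must see that $\mn(\rho,i-1,\cdot)$ plays a double role, bounded \emph{above} by a concrete witness gap in the forward direction (via upward closure) and supplying a witness with gap strictly \emph{below} it in the backward direction (via Lemma~\ref{lm:minimal}), and that this pairing is exactly what makes the recursion tight, i.e.\ what shows that retaining only $\mn(\rho,i-1,\cdot)$ alongside the truth value at $i-1$ suffices to decide satisfiability at $i$. Some care with the strictness of the inequalities ($<n$ versus $\leq n-1$) is also required, but this is routine once the witness-transfer scheme above is fixed.
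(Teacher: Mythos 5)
Your proposal is correct and follows essentially the same route as the paper's proof: unfold the witness semantics, split on whether the witness is at $i$ (resp.\ $i-1$), and use the Minimality lemma together with upward monotonicity in the metric index to justify the $\mn$ side condition in both directions. The only differences are presentational --- you argue the forward direction directly by bounding $\mn(\rho,i-1,\cdot)$ with the concrete witness gap $\tau_{i-1}-\tau_j+1$, whereas the paper argues contrapositively via $(\rho,i-1)\models\phi_1\since_{n-(\tau_i-\tau_{i-1})}\phi_2$ --- and these amount to the same argument.
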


Given Theorem~\ref{thm:recursive-form}, the monitoring algorithm for PTLTL in \cite{DYNAMICLTL}
can be adapted, but with an added data structure to keep track of the function $\mn.$ 
In the following, given a formula $\phi$, we assume that $\exists$, $\past$ and $\diamonddot$ 
have been replaced with its equivalent form as mentioned in Section~\ref{logic}. 

Given a formula $\phi$, let $Sub(\phi)$ be the set of subformulas of $\phi$. We define
a closure set $S^*(\phi)$ of $\phi$ as follows:  Let $Sub^0(\phi) = Sub(\phi)$, and let 
$$
Sub^{n+1}(\phi) = Sub_n(\phi) \cup 
\{Sub(\phi_P(\vec c)) \mid P(\vec c) \in Sub_n(\phi), \mbox{ and } P(\vec x) := \phi_P(\vec x)\}
$$
and define $Sub^*(\phi) = \bigcup_{n\geq 0} Sub^n(\phi).$ 
Since $\Dcal_\alpha$ is finite,
$Sub^*(\phi)$ is finite, although its size is exponential in the
arities of recursive predicates. For our specific applications, the predicates used in our
sample policies have at most arity of two 
(for tracking transitive calls), so this is still tractable. In future work, 
we plan to investigate ways of avoiding this explicit expansion
of recursive predicates.

We now describe how monitoring can be done for $\phi$, given
$\rho$ and $1 \leq i \leq |\rho|.$ 
We assume implicitly a preprocessing step where we compute $Sub^*(\phi)$; we do not describe
this step here but it is quite straightforward. 
Let $\phi_1, \phi_2, \ldots, \phi_m$ be an enumeration of $Sub^*(\phi)$
respecting the partial order $\prec$, i.e., if $\phi_i \prec \phi_j$
then $i \leq j.$
Then we can assign to each $\psi \in Sub^*(\phi)$ an index $i$, s.t., $\psi = \phi_i$ in this 
enumeration. We refer to this index as $idx(\psi).$ 
We maintain two boolean arrays $prev[1,\dots,m]$ and $cur[1,\dots,m].$
The intention is that given $\rho$ and $i > 1$, 
the value of $prev[k]$ corresponds to the truth value of the judgment
$(\rho, i-1) \models \phi_k$ and the truth value of 
$cur[k]$ corresponds to the truth value of the judgment 
$(\rho, i) \models \phi_k.$ 
We also maintain two integer arrays $mprev[1,\dots,m]$ and $mcur[1,\dots,m]$
to store the value of the function $\mn.$ The value of $mprev[k]$
corresponds to $\mn(\rho, i-1, \phi_k)$, and $mcur[k]$ corresponds
to $\mn(\rho, i, \phi_k).$
Note that this preprocessing step
only needs to be done once, i.e., when generating the monitor codes for
a particular policy, which is done offline, prior to inserting the monitor into the
operating system kernel.

The main monitoring algorithm 
is divided into two subprocedures:
the initialisation procedure (Algorithm~\ref{Init_Algorithm})
and the iterative procedure (Algorithm~\ref{Iter_Algorithm}).
The monitoring procedure (Algorithm~\ref{Monitor_Algorithm}) is then a simple combination of these two.
We overload some logical symbols to denote operators on boolean values. 
In the actual implementation, we do not actually implement the loop in Algorithm~\ref{Monitor_Algorithm};
rather it is implemented as an event-triggered procedure, to process each event as they arrive using $Iter$.  

{\small
\begin{algorithm}[t]
\caption{$Monitor(\rho, i, \phi)$}
\label{Monitor_Algorithm}
\begin{algorithmic}
\STATE $Init(\rho,\phi,prev,cur,mprev,mcur)$
\FOR {$j = 1$ to $i$}

\STATE $Iter(\rho,j,\phi,prev,cur,mprev,mcur);$
\ENDFOR

\RETURN $cur[idx(\phi)];$
\end{algorithmic}
\end{algorithm}
}

{\small
\begin{algorithm}[t]
\caption{$Init(\rho, \phi, prev, cur, mprev, mcur)$} 
\label{Init_Algorithm}
\begin{algorithmic}
\FOR {$k=1,\dots,m$}
\STATE $prev[k] := false$, $mprev[k] := 0$ and $mcur[k] := 0;$
\ENDFOR
\FOR {$k=1,\dots,m$}

\SWITCH {$\phi_k = \bot$}
\CASELINE{$\bot$}{$cur[k] := false;$}

\CASELINE {$p(\vec c)$}{$cur[k] := p(\vec c) \in \pi_1;$}

\CASELINE {$P(\vec c)$}{$cur[k] := cur[idx(\phi_P(\vec c))];$ \COMMENT{Suppose $P(\vec x) := \phi_P(\vec x).$}}

\CASELINE {$\neg \psi$}{$cur[k] := \neg cur[idx(\psi)];$}

\CASELINE {$\psi_1 \lor \psi_2$}{$cur[k] := cur[idx(\psi_1)] \lor cur[idx(\psi_2)];$}

\CASELINE {$\prev \psi$}{$cur[k] := false;$}

\CASELINE {$\diamonddot \psi$}{$cur[k] := false;$}

\CASELINE {$\psi_1 \since \psi_2$}{$cur[k] := cur[idx(\psi_2)];$}

\CASELINE {$\prev_n \psi$}{$cur[k] := false;$}

\CASELINE {$\diamonddot_n \psi$}{$cur[k] := false; mcur[k] := 0;$}

\CASE {$\phi_k = \psi_1 \since_n \psi_2$}
\STATE $cur[k] := cur[idx(\psi_2)];$
\IFLINE {$cur[k] = true$}{$mcur[k] := 1;$}
\ELSELINE {$mcur[k] := 0;$ }
\ENDIFLINE
\ENDCASE

\ENDSWITCH

\ENDFOR

\RETURN $cur[idx(\phi)];$

\end{algorithmic}
\end{algorithm}
}

{\small
\begin{algorithm}[t]
\caption{$Iter(\rho, i, \phi, prev, cur, mprev, mcur)$} 
\label{Iter_Algorithm}
\begin{algorithmic}
\REQUIRE {$i > 1.$}

\STATE $prev := cur;$ $mprev := mcur;$
\FORLINE {$k = 1$ to $m$}{$mcur[k] := 0;$}
\FOR {$k = 1$ to $m$}

\SWITCH{$\phi_k$}
\CASELINE {$\bot$}{$cur[k] := false$;}

\CASELINE {$p(\vec c)$}{$cur[k] := p(\vec c) \in \pi_i$;}

\CASELINE {$\neg \psi$}{$cur[k] := \neg cur[idx(\psi)];$}

\CASELINE {$P(\vec c)$}{$cur[k] := cur[idx(\phi_P(\vec c))];$ \COMMENT{Suppose $P(\vec x) := \phi_P(\vec x).$}}

\CASELINE {$\psi_1 \lor \psi_2$}{$cur[k] := cur[idx(\psi_1)] \lor cur[idx(\psi_2)];$}

\CASELINE{$\prev \psi$}{$cur[k] := prev[idx(\psi)];$}

\CASELINE{$\diamonddot \psi$}{$curr[k] := prev[idx(\psi)] \lor prev[\diamonddot \psi];$}

\CASELINE{$\psi_1 \since \psi_2$}{$cur[k] := cur[idx(\psi_2)] \lor (cur[idx(\psi_1)] \land prev[idx(\psi_2)]);$}

\CASELINE{$\prev_n \psi$}{$cur[k] := prev[\psi] \land (\tau_i - \tau_{i-1} < n);$}

\CASE {$\diamonddot_n \psi$}
\STATE $l := prev[idx(\psi)] \land (\tau_i - \tau_{i-1} < n);$
\STATE $r := prev[idx(\diamonddot_n \psi)] \land (n - (\tau_i - \tau_{i-1}) \geq mprev[k]));$
\STATE $cur[k] := l \lor r;$
\IFLINE {$l$}{$mcur[k] := \tau_i - \tau_{i-1} + 1;$}
\ELSIFLINE {$r$}{$mcur[k] := mprev[k] + \tau_i - \tau_{i-1};$}
\ELSELINE{$mcur[k] := 0;$}
\ENDIFLINE
\ENDCASE

\CASE {$\psi_1 \since_n \psi_2$}
\STATE $l := cur[idx(\psi_2)];$
\STATE $r := cur[idx(\psi_1)] \land prev[k] \land (n - (\tau_i - \tau_{i-1}) \geq mprev[k]);$
\STATE $cur[k] := l \lor r;$
\IFLINE {$l$}{$mcur[k] := 1;$}
\ELSIFLINE {$r$}{$mcur[k] := mprev[k] + \tau_i - \tau_{i-1};$}
\ELSELINE{$mcur[k] := 0;$ }
\ENDIFLINE
\ENDCASE

\ENDSWITCH
\ENDFOR
\RETURN $cur[idx(\phi)];$
\end{algorithmic}
\end{algorithm}
}

\begin{theorem}
$(\rho, i) \models \phi$ iff $Monitor(\rho,i,\phi)$ returns true.
\end{theorem}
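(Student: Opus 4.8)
The plan is to prove both directions simultaneously by exhibiting a loop invariant and reading the theorem off its instance at world $i$. Fixing the $\prec$-monotone enumeration $\phi_1,\dots,\phi_m$ of $Sub^*(\phi)$, the invariant states: at the point where the monitor has finished processing world $j$, for every index $k$ we have $cur[k] = true$ iff $(\rho,j)\models\phi_k$, and $mcur[k] = \mn(\rho,j,\phi_k)$. I would prove that $Init$ establishes this invariant for $j=1$, and that each call $Iter(\rho,j,\dots)$ turns a state satisfying the invariant at $j-1$ into one satisfying it at $j$. Taking $j=i$ and $k=idx(\phi)$ then yields $(\rho,i)\models\phi$ iff $Monitor$ returns $true$. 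The argument is a nested induction: an outer induction on $j$, and, within the processing of a single world, an inner induction along the well-founded order $\prec$ of Definition~\ref{def:order} (well-founded by Lemma~\ref{lm:order}). The inner induction is legitimate precisely because the enumeration respects $\prec$: whenever an assignment to $cur[k]$ reads $cur[idx(\psi)]$ we have $\psi \prec \phi_k$, so $\psi$ already carries its final value in the current pass.

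For the base case ($Init$, world $1$) I would check each branch of the switch against the world-$1$ semantics. The propositional cases ($\bot$, $p(\vec c)$, $\neg\psi$, $\psi_1 \lor \psi_2$, $P(\vec c)$) are immediate from the inner hypothesis. For the temporal operators $\prev\psi$, $\spast\psi$, $\prev_n\psi$ and $\spast_n\psi$ the semantic clauses all require $i>1$, so each is correctly set to $false$ (and $mcur$ to $0$); for $\psi_1 \since \psi_2$ and $\psi_1 \since_n \psi_2$ the definition at $i=1$ collapses to the right operand, matching $cur[idx(\psi_2)]$, and the minimal witnessing interval of a true $\since_n$ at a single state is $1$, which is exactly what $Init$ records.

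For the inductive step ($Iter$, world $i>1$), assuming the invariant at $i-1$, the assignments $prev := cur$ and $mprev := mcur$ make $prev[k]$ and $mprev[k]$ faithful to $(\rho,i-1)\models\phi_k$ and $\mn(\rho,i-1,\phi_k)$. I would then verify each branch. The propositional cases reduce to the inner hypothesis and the semantics directly; the guardedness requirement on definitions is what makes the $P(\vec c)$ branch sound, since every recursive occurrence inside $\phi_P$ sits under a $\prev$, $\prev_m$, $\spast$ or $\spast_n$ and hence is resolved through $prev$ rather than through a not-yet-computed $cur$ entry, so no circular dependence arises. The cases $\prev\psi$, $\spast\psi$, $\prev_n\psi$ and the plain $\since$ follow from the standard recursive unfolding. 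The two metric cases $\spast_n\psi$ and $\psi_1\since_n\psi_2$ carry the real content: here I would invoke Theorem~\ref{thm:recursive-form} to justify the Boolean value assigned to $cur[k]$, reading the local variables $l$ and $r$ as the two disjuncts of the recursive form, with the side condition $n-(\tau_i-\tau_{i-1}) \geq mprev[k]$ implementing the comparison with $\mn(\rho,i-1,\cdot)$ demanded by the theorem.

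The main obstacle, and the step needing the most care, is showing that the integer arrays track $\mn$ correctly, i.e.\ that after the metric branches $mcur[k] = \mn(\rho,i,\phi_k)$. This is precisely where Lemma~\ref{lm:minimal} is needed: I must argue that when $\phi_k$ becomes true because of the fresh state (the $l$ branch) the minimal witnessing interval is $\tau_i - \tau_{i-1} + 1$ for $\spast_n$ and $1$ for $\since_n$, whereas when it holds by propagation (the $r$ branch) the minimal interval grows by exactly the elapsed time, $mprev[k] + \tau_i - \tau_{i-1}$; and that these values are genuinely minimal, using the minimality content of Lemma~\ref{lm:minimal} together with the monotonicity $\tau_i \geq \tau_{i-1}$ of the model. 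Verifying that no smaller witness can exist in each branch --- in particular that a propagated witness cannot be beaten by one created at the current world --- is the delicate part, and it is here that restricting intervals to the form $[0,n)$ is essential, since it is what lets a single minimal timestamp summarise the entire relevant history and keeps the recursion genuinely trace-length independent. All remaining cases are routine checks against the semantic clauses of Section~\ref{logic}.
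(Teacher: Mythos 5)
Your proposal is correct and follows essentially the same route as the paper's own (much terser) proof: an outer induction on the world index combined with an inner induction along the well-founded order $\prec$ of Definition~\ref{def:order}, with Theorem~\ref{thm:recursive-form} justifying the metric branches and the $\mn$-tracking of $mcur$/$mprev$ matching the construction in the proof of Lemma~\ref{lm:minimal}. Your version simply makes explicit the loop invariant and the case analysis that the paper leaves as ``straightforward.''
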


The $Iter$ function only depends on two worlds: $\rho_i$ and $\rho_{i-1}$, so the
algorithm is trace-length independent. In principle there is no upperbound
to its space complexity, as the timestamp $\tau_i$ can grow arbitrarily large, as
is the case in \cite{Basin11RV}. Practically, however, the timestamps
in Android are stored in a fixed size data structure, so in such a case, when the
policy is fixed, the space complexity is constant (i.e., independent of the length of
history $\rho$).

\section{Examples}
\label{examples}

We provide some basic policies as an example of how we can use this
logic to specify security policies. 
From now on, we shall only quantify over the domain $app$, so in the following
we shall omit the sort annotation in the existential quantifier. 
The predicate $trans$ is the recursive predicate defined in Equation~(\ref{eq:trans})
in the introduction. 
The constant $sink$ 
denotes a service or resource that 
an unprivileged application tries to access via privilege escalation
e.g. send SMS, or access to internet. 
The constant $contact$ denotes the Contact provider app in Android.
We also assume the following ``static'' predicates (i.e., their truth
values do not vary over time):

\begin{itemize}
\item $system(x)$: $x$ is a system app or process.
\item $hasPermissionToSink(y)$: $y$ has permission to access the sink.
\item $trusted(x)$: $x$ is an app that the user trusts. 
This is not a feature of Android, rather, it is a specific feature
of our implementation. We build into our implementation 
a `trust' management app to allow the user a limited control over
apps that he/she trusts. 
\end{itemize}

The following policies refer to access patterns that are 
forbidden. So given a policy $\phi$, the monitor at each state $i$ 
make sure that
$(\rho,i) \not \models \phi$ holds. Assuming that $(\rho, i) \not \models \phi$, where
$i = |\rho|$, holds, then whenever a new event (i.e., the IPC call) $e$ is
registered at time $t$,  
the monitor 
checks that $(([\pi;e], [\tau;t]), i+1) \not \models \phi$ holds.
If it does, then the call is allowed to proceed. Otherwise, it will be terminated. 

\begin{enumerate}
\item $\exists x. (call(x,sink) \wedge \neg system(x) \wedge \neg trusted(x)).$ 

This is a simple policy where we block a direct call from any untrusted
application to the sink.
This policy can serve as a
privilege manager where we dynamically revoke permission for
application to access the sink regardless of the static permission it
asked during installation.

\item $\exists_{x} (trans(x, sink) \land \neg system(x) \land \neg hasPermissionToSink(x)).$

This policy says that transitive calls to a sink from non-system apps are
forbidden, unless the source of the calls already has permission
to the sink. This is then a simple privilege escalation detection (for
non-system apps).

\item $\exists_{x} (trans(x, sink) \wedge \neg system(x) \wedge \neg trusted(x)).$

This is further refinement to the policy in that we also give the user
privilege to decide for themselves dynamically whether or not to trust
an application. 
Untrusted apps can not make transitive call to the sink, but
trusted apps are allowed, regardless of their permissions.

\item 
$\exists_{x} (trans(x, internet) \wedge \neg system(x) \wedge \neg trusted(x) \wedge \diamonddot(call(x, contact))).$

This policy allows privilege escalation by non-trusted apps 
as long as there is no potential
for data leakage through the sink. That is, as soon as a non-system
and untrusted app accesses contact, it will be barred from 
accessing the internet. Note that 
the use of non-metric operator $\diamonddot$ ensures that 
the information that a particular app has accessed contact is persistent.

\end{enumerate}

\section{Implementation}
\label{implementation}

\begin{table}[t]
\caption{Performance Table}
\label{performanceTable}
\centering
	\small{
	\begin{tabular}{ c c c c}
		\hline\hline
		Policy  & Uncached & Cached\\
		\hline
1  & 766.4 & 143.6\\
2  & 936.5 & 423.6\\
3  & 946.8 & 418.3\\
4  & 924.3 & 427.5\\
No Monitor & 758 & 169\\
		\hline
\end{tabular}
	}
\end{table}

\begin{figure}[t]
	\centering
	\subfloat[]{\includegraphics[trim=0 0 0 15, clip, width=0.5\textwidth]{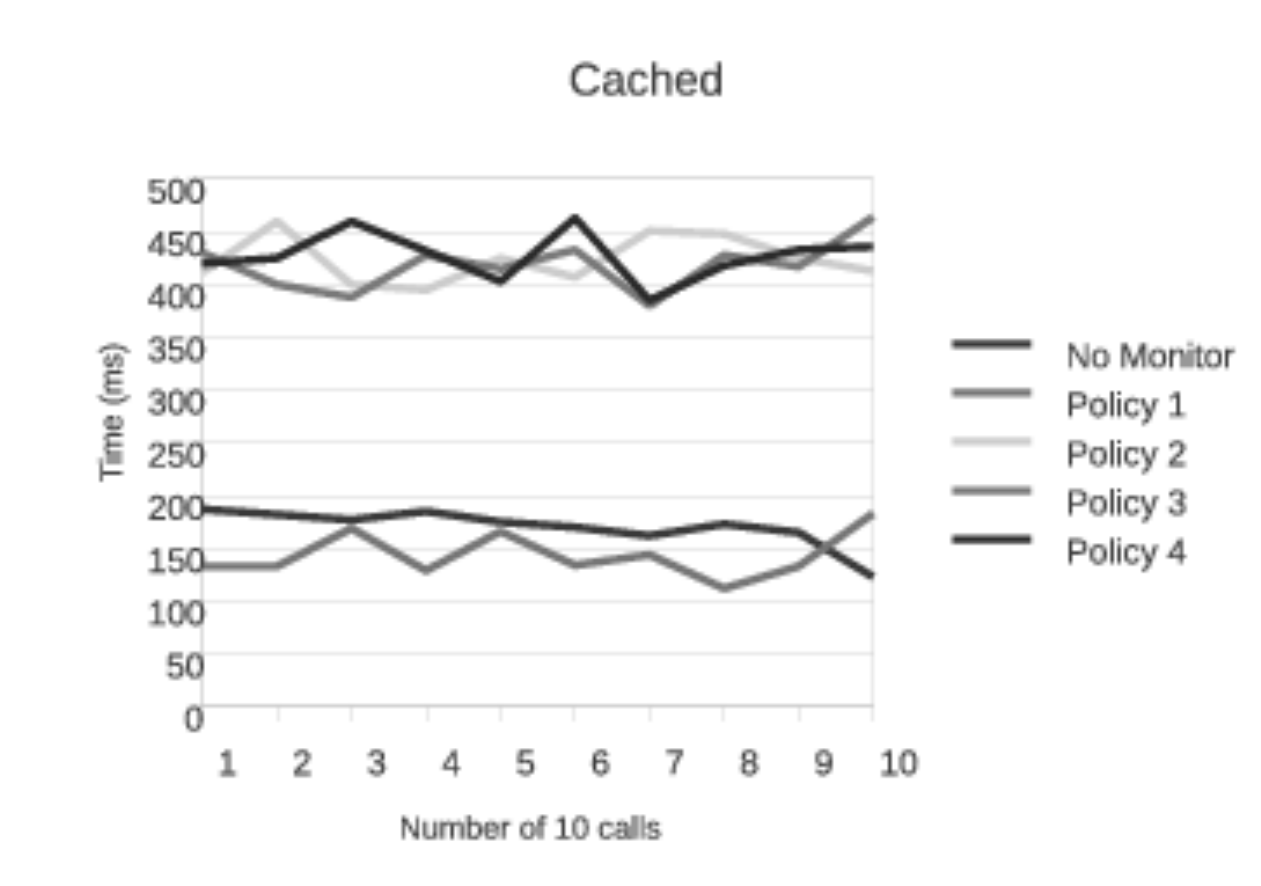}}
	\subfloat[]{\includegraphics[trim=0 0 0 15, clip, width=0.5\textwidth]{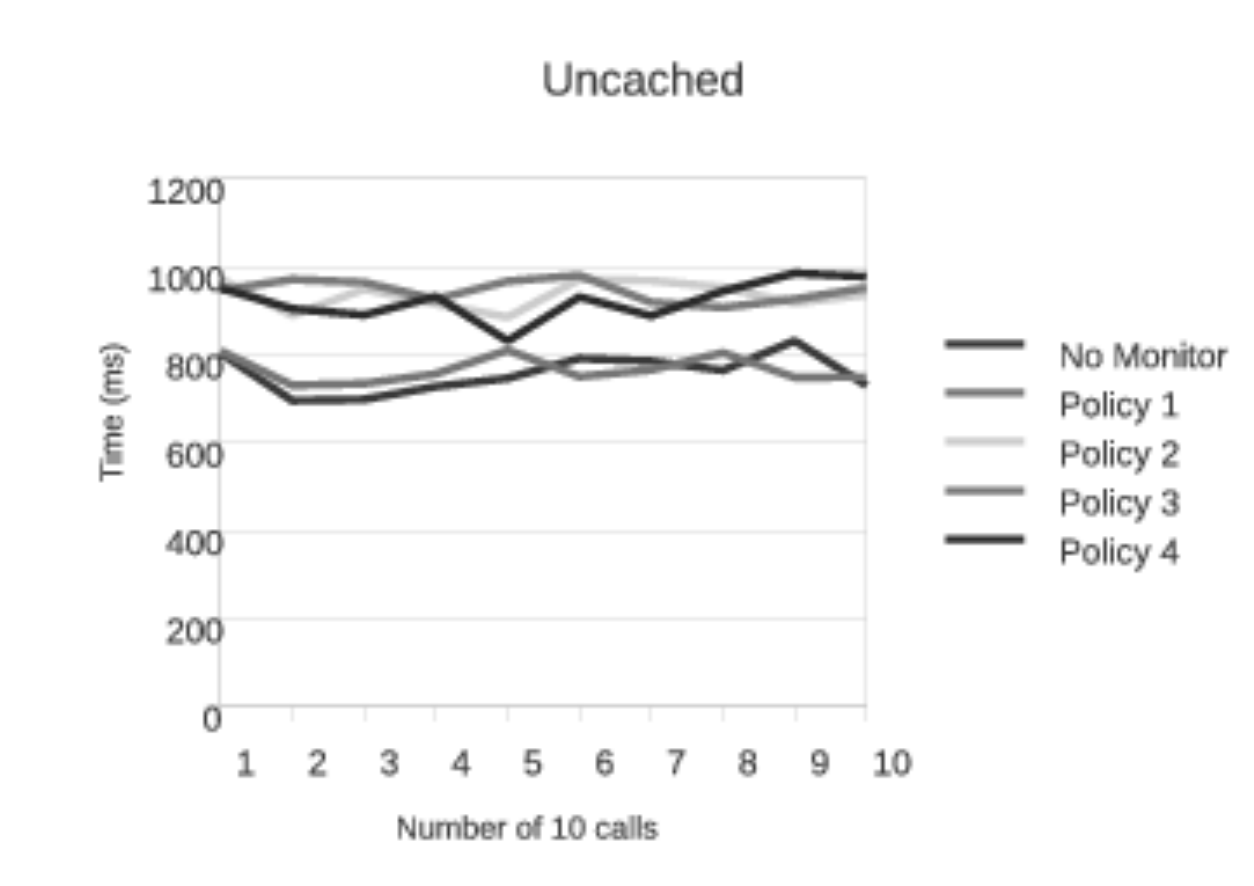}}
	\caption{Timing of Calls}
	\label{callTiming}
\end{figure}

We have implemented the monitoring algorithm presented in the previous section
in Android 4.1. Some modifications to the application framework and the underlying
Linux kernel in Android are neccessary to ensure our monitor can effectively monitor
and stop unwanted behaviours. We have tested our implementation in both Android
emulator and an actual device (Samsung Galaxy Nexus phone). 
We give a brief overview of our implementation; details of the monitor code
generators and the modified Android kernel images are available from
the authors' website. 

Our implementation consists of two parts: the codes that
generate a monitor given a policy specification,  and the modifications of Android
framework and its Linux kernel to hook our monitor and to intercepts IPCs and
access to Android resources. 
To improve runtime performance, the monitor generation is done outside Android; it produces
C codes that are then compiled into a kernel module, and inserted into 
Android boot image. 

The monitor generator takes an input policy, encoded in an XML
format extending that of RuleML.
The monitor generator works by following the logic of the monitoring
algorithm presented in Section~\ref{monitor}. It takes a policy
formula $\phi$, and generates the required data structures and determine
an ordering between elements of $Sub^*(\phi)$ as described earlier,
and produces the codes illustrated in Algorithm~\ref{Init_Algorithm},
\ref{Iter_Algorithm} and \ref{Monitor_Algorithm}. 

The main body of our monitor lies in the Linux kernel space 
as a kernel module.
The reason for this is that there are some cases where
Android leaves the permission checking to the Linux kernel layer
e.g., for opening network socket.
However, to monitor the IPC events between Android components and apps,
we need to place a hook inside the application framework. The IPC between
apps is done through passing a data structure called {\em Intent}, which
gets broken down into {\em parcels} before they are passed down to the
kernel level to be delivered. So intercepting these parcels and 
reconstructing the original Intent object in the kernel space would be more difficult 
and error prone. 
The events generated by apps or components will be passed down to the 
monitor in the kernel, along with the application's user id. 
If the event is a call to the sink, then depending on the policy that is
implemented in the monitor, it will decide to whether block or allow the
call to proceed.  We do this through our custom additional system 
calls to the Linux kernel which goes to this monitor.

Our implementation places hooks in four services, 
namely accessing internet, sending SMS, 
accessing location, and accessing contact database. 
For each of this sink, we add a virtual
UID in the monitor and treat it as a component of Android. 
We currently track only IPC calls through the Intent passing mechanism. 
This is obviously not enough to detect all possible communications between
apps, e.g., those that are done through file systems, or side channels, such
as vibration setting (e.g., as implemented in SoundComber~\cite{Soundcomber}), so our
implementation is currently more of a proof of concept.  
In the case of SoundComber, our monitor can actually intercepts the calls
between colluding apps, due to the fact that they utilise intent broadcast
through IPC to synchronize data transfer.

We have implemented some apps to test policies we mentioned in Section~\ref{examples}. 
In Table \ref{performanceTable} and Figure \ref{callTiming}, we provide some measurement of
the timing of the calls between applications. The policy numbers in Table~\ref{performanceTable}
refer to the policies in Section~\ref{examples}. To measure the average time
for each IPC call, we construct a chain of ten apps, making successive calls between
them, and measure the time needed for one end to reach the other. 
We measure two different average timings in miliseconds (ms)
for different scenarios, based on whether the apps are in the background cache (i.e.,
suspended) or not. 
We also measure time spent
on the monitor actually processing the event, which are around 1 ms
for policy 1, and around 10 ms for the other three policies. 
This shows that the  time spent in processing the
event is quite low, but more overhead comes from the space required to
process the event (there is a big jump in overall timing from simple
rules with at most 2 free variables to more complex one with 3 free
variables). Figure \ref{callTiming} 
shows
that the timing of calls over time for
each policy are roughly the same. 
This backs our claim that
even though our monitor implements history-based access control, its 
performance does not depend on the size of the history.

\section{Conclusion, related and future work}
\label{conclusion}

We have shown a policy language design based on MTL that can effectively
describe various scenarios of privilege escalation in Android. Moreover,
any policy written in our language can be effectively
enforced. The key to the latter is the fact that our enforcement procedure
is trace-length independent. We have also given a proof-of-concept implementation
on actual Android devices and show that our implementation can effectively
enforce RMTL policies. 

We have already discussed related work in runtime monitoring based on LTL in the introduction.
We now discuss briefly related work in Android security. 
There is a large body of works in this area, more than what can be
reasonably surveyed here, so we shall focus on the most relevant
ones to our work, i.e., those that deal with privilege
escalation. For a more comprehensive survey on
other security extensions or analysis, the interested reader can
consult~\cite{LastPE}. 
QUIRE \cite{QUIRE} is an application centric approach 
to privilege escalation, done by tagging the intent
objects with the caller's UID. Thus, 
the recipient application can check the permission of the source of the
call chain. 
IPC Inspection \cite{IPCInspection} is another application centric solution 
that works 
by reducing the privilege of the
recipient application when it receives a communication from a less
privileged application. 
XManDroid \cite{LastPE} is a system centric solution, just like ours. Its security monitor
maintains a call graph between apps. It is the closest to our solution,
except that 
we are using temporal logic to specify a policy, and our policy can be 
modified modularly.
Our policy language is also more expressive as we can specify both
temporal and metric propertes. 
TaintDroid~\cite{TaintDroid} is another system-centric solution, but it is 
designed to track data flow, rather than control flow, via taint analysis, so
privilege escalation can be inferred from leakage of data.

We currently do not deal with quantifiers directly in our algorithm.
Such quantifiers are expanded into purely propositional connectives (when the domain
is finite), which is exponential in the number of variables in the policy. As an immediate future work,
we plan to investigate whether techniques using {\em spawning automata}~\cite{bauer2013rv}
can be adapted to our setting to allow a ``lazy'' expansion of quantifiers as needed. 
It is not possible to design trace-length-independent
monitoring algorithms in the unrestricted first-order LTL~\cite{bauer2013rv}, so
the challenge here is to find a suitable restriction that can
be enforced efficiently.

\bibliographystyle{plain}
\bibliography{biblio}

\newpage
\newpage
\appendix

\section{Proofs}


\noindent
\textbf{Lemma \ref{lm:order}.}
\textit{
The relation $\prec$ on RMTL formulas is a well-founded partial order. 
}
\begin{proof}
We first show that $\prec$ is well-founded. Suppose otherwise: then there is 
an infinite descending chain of formulas:
$$
\cdots \prec_S \phi_n \prec_S \phi_{n-1} \prec_S \cdots \prec_S \phi_2 \prec_S \phi_1.
$$
Obviously, none of $\phi_i$'s can be a bottom element (i.e., those that take the
form as specified in clause (1) of Definition~\ref{def:order}). 
Furthermore, there must be an $i$ such that $\phi_i = P(\vec t)$ and $\phi_{i+1} = \phi_P(\vec t)$
where $P$ is a recursive predicate defined by $P(\vec x) := \phi_P(\vec x).$ 
If no such $i$ exists
then all the instances of the relation $\prec_S$ in the chain must be instances
of clause (3) and (4) in Definition~\ref{def:order}, and the chain would be finite
as those two clauses relate only strict subformulas. 
So without loss of generality, let us assume that $\phi_1 = P(\vec t)$. 
We claim that for every $j > 1$, every occurrence of any recursive predicate in $\phi_j$ is guarded.
We prove this by induction on $j$. If $j=2$ then we have $\phi_2 \prec_S P(\vec t)$. In this case, $\phi_2$ must be $\phi_P(\vec t)$, and by the guardedness condition,
all recursive predicates in $\phi_P$ are guarded.
If $j > 2$, then we have $\phi_j \prec_S \phi_{j-1}$. By induction hypothesis,
all recursive predicates in $\phi_{j-1}$ are guarded. In this case, 
the relation $\phi_j \prec_S \phi_{j-1}$ must be an instance of either
clause (3) or clause (4) of Definition~\ref{def:order}, and therefore $\phi_j$ also satisfies the guardedness
condition. 

So now we have that none of $\phi_i$'s are recursive predicates. This means
that all instances of $\prec_S$ in the chain must be instances of clause (3) and (4)
in Definition~\ref{def:order}, and consequently the size of the formulas in
the chain must be strictly decreasing. Thus the chain cannot be infinite, contrary
to the assumption. 

Anti-symmetry follows immediately from well-foundedness. Suppose $\prec$ is not
anti-symmetric. Then we have a chain
$$
\phi = \phi_1 \prec_S \phi_2 \prec_S \phi_3 \prec_S \cdots \prec_S \phi_n = \phi
$$
where $n > 1.$
We can repeat this chain to form an infinite descending chain, which contradicts
the well-foundedness of $\prec.$
\qed
\end{proof}

\noindent
\textbf{Lemma \ref{lm:minimal} (Minimality).}
\textit{
If $(\rho, i) \models \phi_1 \since_n \phi_2$ ($(\rho, i) \models \diamonddot_n \phi$) then there exists an $m \leq n$ 
such that 
$(\rho, i) \models \phi_1 \since_m \phi_2$ (resp. $(\rho, i) \models \diamonddot_m \phi$) and 
such that for every $k$ such that $0 < k < m$, 
we have $(\rho, i) \not \models \phi_1 \since_k \phi_2$ (resp., 
$(\rho, i) \not \models \diamonddot_k \phi$).
}
\begin{proof}
We show a case for $(\rho, i) \models \phi_1 \since_n \phi_2$; the other case is straightforward.
We prove this by induction on $i.$
\begin{itemize}
\item Base case: $i = 1.$ Since $(\rho, i) \models \phi_1 \since_n \phi_2$, it must be the case that
$(\rho, i) \models \phi_2$. In this case, let $m = 1.$ Obviously $(\rho, i) \models \phi_1 \since_m \phi_2$
and $m$ is minimal. 

\item Inductive case: $i > 1.$
We have $(\rho, i) \models \phi_1 \since_n \phi_2$. By the definition of $\models$, there exists
$j \leq i$ such that $(\rho, j) \models \phi_2$ and 
$(\rho, k) \models \phi_1$ for every $k$ s.t.
$j < k \leq i$ and $\tau_i - \tau_j < n.$
If $(\rho, i) \models \phi_2$, then 
we have $(\rho, i) \models \phi_1 \since_1 \phi_2.$
In this case, let $m = 1.$
If $(\rho, i) \not \models \phi_2$, 
then it must be the case that $j < i$. 
It is not difficult to see that in this case we must have
$$(\rho, i-1) \models \phi_1 \since_{n - (\tau_i - \tau_{i-1})} \phi_2.$$
By the induction hypothesis, we have there is an $m'$ such that
$$
(\rho, i-1) \models \phi_1 \since_{m'} \phi_2
$$
and for every $l$ s.t. $l < m'$, we have
$(\rho, i-1) \not \models \phi_1 \since_l \phi_2.$
In this case, we let $m = m' + \tau_i - \tau_{i-1}.$
It is straightforward to check that
$(\rho, i) \models \phi_1 \since_m \phi_2.$

Now, we claim that this $m$ is minimal. Suppose otherwise, i.e.,
there exists $k < m$ s.t. 
$(\rho, i) \models \phi_1 \since_k \phi_2.$
Since $(\rho, i) \not \models \phi_2$, we must have
$(\rho, i-1) \models \phi_1 \since_{k - (\tau_i - \tau_{i-1})} \phi_2$.
But $(k - (\tau_i - \tau_{i-1})) < m'$, so this contradicts
the minimality of $m'.$

\end{itemize}
\qed
\end{proof}

\begin{lemma}[Monotonicity]
\label{lm:monotonicity}
If $(\rho, i) \models \phi_1 \since_n \phi_2$ (resp., $(\rho, i) \models \prev_n \phi$
and $(\rho, i) \models \diamonddot_n \phi$) then for every $m \geq n$,
we have $(\rho, i) \models \phi_1 \since_m \phi_2$ (resp., 
$(\rho, i) \models \prev_m \phi$
and $(\rho, i) \models \diamonddot_m \phi$).
\end{lemma}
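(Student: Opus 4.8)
The plan is to prove all three statements directly from the semantic clauses, with no induction required, by observing that in each case the bound $n$ enters only through a single timing inequality that is monotone in $n$. The key structural fact is that the witnessing index which makes a metric formula true at bound $n$ continues to work verbatim at any larger bound $m$, since every other condition in the relevant semantic clause is independent of the bound.

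First I would handle $\phi_1 \since_n \phi_2$. Assuming $(\rho, i) \models \phi_1 \since_n \phi_2$, the semantics supplies a witness $j \leq i$ with $(\rho, j) \models \phi_2$, with $(\rho, k) \models \phi_1$ for all $k$ such that $j < k \leq i$, and with $\tau_i - \tau_j < n$. For $m \geq n$ I would reuse the very same $j$: the first two conditions do not mention the bound and so are unchanged, while the timing condition follows from $\tau_i - \tau_j < n \leq m$, hence $\tau_i - \tau_j < m$. This yields $(\rho, i) \models \phi_1 \since_m \phi_2$ immediately.

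The cases for $\prev_n$ and $\diamonddot_n$ are completely analogous. For $\prev_n \phi$ the relevant index is fixed at $i-1$, and I would note that the conditions $i > 1$ and $(\rho, i-1) \models \phi$ are bound-independent, while $\tau_i - \tau_{i-1} < n \leq m$ supplies the required timing condition for bound $m$. For $\diamonddot_n \phi$ I would reuse the witness $j < i$ obtained from the bound-$n$ satisfaction, observing that $(\rho, j) \models \phi$ is untouched and that $\tau_i - \tau_j < n \leq m$ gives the inequality $\tau_i - \tau_j < m$.

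There is essentially no obstacle in this argument: the entire proof rests on the transitivity of the strict and non-strict orders on natural numbers, namely that $\tau_i - \tau_j < n \leq m$ implies $\tau_i - \tau_j < m$. In contrast to Lemma~\ref{lm:minimal}, no induction on $i$ is needed here, precisely because enlarging the bound only relaxes the timing constraint and never forces a change of the witnessing index.
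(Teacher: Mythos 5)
Your proof is correct and matches the paper's approach: the paper simply states the lemma is ``straightforward from the definition of $\models$,'' and your argument --- reusing the same witness index and relaxing the timing inequality via $\tau_i - \tau_j < n \leq m$ --- is exactly the intended elaboration.
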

\begin{proof}
Straightforward from the definition of $\models.$ \qed
\end{proof}

\noindent \textbf{Theorem \ref{thm:recursive-form} (Recursive forms).}
\textit{
For every model $\rho$, every $n \geq 1$, $\phi$, $\phi_1$ and $\phi_2$, 
and every $1 < i \leq |\rho|$, the following hold:}
\begin{enumerate}
\item $(\rho, i) \models \phi_1 \since_n \phi_2$ iff
$(\rho, i) \models \phi_2$, or 
$(\rho, i) \models \phi_1$ and 
$(\rho, i-1) \models \phi_1 \since_n \phi_2$ and 
$n - (\tau_i - \tau_{i-1}) \geq \mn(\rho, i-1, \phi_1 \since_n \phi_2).$

\item $(\rho, i) \models \diamonddot_n \phi$ iff
$(\rho, i - 1) \models \phi$ and $\tau_i - \tau_{i-1} < n$, or 
$(\rho, i-1) \models \diamonddot_n \phi$ and 
$n - (\tau_i - \tau_{i-1}) \geq \mn(\rho, i-1, \diamonddot_n \phi).$

\end{enumerate}
\begin{proof}
We show the case for $\since_n$; the other case is similar.

Suppose $(\rho, i) \models \phi_1 \since_n \phi_2$. By definition, there exists $j \leq i$ such that
\begin{equation}
\label{eq:re1}
(\rho, j) \models \phi_2,
\end{equation}
\begin{equation}
\label{eq:re2}
\tau_i - \tau_j \leq n, 
\end{equation}
and for every $k$ s.t. $j < k \leq i$ we have 
\begin{equation}
\label{eq:re3}
(\rho, k) \models \phi_2, 
\end{equation}
Suppose that the right-hand side of iff doesn't hold, i.e., we have 
$(\rho, i) \not \models \phi_2$, and
that one of the following hold:
\begin{itemize}
\item $(\rho, i) \not \models \phi_1$, 
\item $(\rho, i-1) \not \models \phi_1 \since_n \phi_2$, or
\item $n - (\tau_i - \tau_{i-1}) < \mn(\rho, i-1,  \phi_1 \since_n \phi_2).$
\end{itemize}
The first case contradicts our assumption in (\ref{eq:re3}), so it cannot hold.
Note that since $(\rho, i) \not \models \phi_2$, it must be the case that
$j \leq i-1$, so (\ref{eq:re1}), (\ref{eq:re2}), and (\ref{eq:re3}) above entail that
$(\rho, i-1) \models \phi_1 \since_n \phi_2$, so the second case can't hold either. 
For the third case: from (\ref{eq:re2}) we have 
$\tau_i - \tau_j = (\tau_i - \tau_{i-1}) + (\tau_{i-1} - \tau_j) \leq n$
so
$$
\tau_{i-1} - \tau_j \leq n - (\tau_{i} - \tau_{i-1}).
$$
This, together with (\ref{eq:re1}) and (\ref{eq:re2}), implies that
$(\rho, i-1) \models \phi_1 \since_{n - (\tau_{i} - \tau_{i-1})} \phi_2$.
If $n - (\tau_i - \tau_{i-1}) < \mn(\rho, i-1, \phi_1 \since_n \phi_2)$ holds, then it would
contradict the fact that $\mn(\rho, i-1, \phi_1 \since_n \phi_2)$ is the minimal
index as guaranteed by Lemma~\ref{lm:minimal}. Hence the third case cannot hold either.

For the other direction, suppose that either of the following holds:
\begin{itemize}
\item $(\rho, i) \models \phi_2$, or
\item $(\rho, i) \models \phi_1$ and 
$(\rho, i-1) \models \phi_1 \since_n \phi_2$ and 
$n - (\tau_i - \tau_{i-1}) \geq \mn(\rho, i-1, \phi_1 \since_n \phi_2).$
\end{itemize}
If it is the first case, i.e., $(\rho,i)\models \phi_2$, then trivially $(\rho, i) \models \phi_1 \since_n \phi_2.$
So suppose it is the second case. 
By Lemma~\ref{lm:minimal}, we have $(\rho, i-1) \models \phi_1 \since_m \phi_2$, where $m = \mn(\rho, i-1, \phi_1 \since_n \phi_2).$
Since $n - (\tau_i - \tau_{i-1}) \geq m$, by Lemma~\ref{lm:monotonicity}, we have $(\rho, i-1) \models \phi_1 \since_{n - (\tau_i - \tau_{i-1})} \phi_2.$
This means there exists $j \leq i-1$ such that 
\begin{itemize}
\item $(\rho, j) \models \phi_2$,
\item $(\tau_{i-1} - \tau_j) \leq n - (\tau_i - \tau_{i-1})$, and
\item $(\rho, k) \models \phi_1$, for every $k$ s.t. $j < k \leq i-1$.
\end{itemize}
The second item implies that $(\tau_i - \tau_j) \leq n$.
These and the fact that $(\rho, i) \models \phi_1$ entail that $(\rho, i) \models \phi_1 \since_n \phi_2.$
\qed
\end{proof}

\noindent \textbf{Theorem 2.}
\textit{
$(\rho, i) \models \phi$ iff $Monitor(\rho,i,\phi)$ returns true.
}
\begin{proof}
The proof is quite straightforward, because each step in the calculation of the truth values of
subformulas of $\phi$ corresponds to their (recursive-form) semantics reading and Theorem~\ref{thm:recursive-form}. 
Moreover, by the well-foundedness of $\prec$, this algorithm terminates, and at each step, the calculation
of a subformula $\phi$ uses only values of other subformulas smaller than $\phi$ in the ordering $\prec$, or
truth values of subformulas in the previous world (which are already defined). 
In the case of $\since_n$, the updates of the value $mcur$ and $mprev$ corresponds
exactly to the construction shown in the proof of Lemma~\ref{lm:minimal}. 
\qed
\end{proof}

\end{document}